\documentclass[reqno,12pt]{amsart} 
\usepackage
{amsaddr}
\usepackage{setspace,graphicx,epstopdf,amsmath,amsfonts,amsgen,
amstext,amsthm,amsbsy,amsopn,amssymb,txfonts,bbm,epic, pstricks,tikz,pict2e,parskip,verbatim,mathrsfs}  
\usepackage[square,numbers]{natbib} 
\usepackage[utf8]{inputenc}   
\usepackage[pdfborder={0 0 0}]{hyperref}

\usepackage{dsfont}

\let\oldsqrt\sqrt
\def\sqrt{\mathpalette\DHLhksqrt}
\def\DHLhksqrt#1#2{%
\setbox0=\hbox{$#1\oldsqrt{#2\,}$}\dimen0=\ht0
\advance\dimen0-0.2\ht0
\setbox2=\hbox{\vrule height\ht0 depth -\dimen0}%
{\box0\lower0.4pt\box2}}
 

\def\b0{\boldsymbol{0}}

\def\bcdot{\boldsymbol{\cdot}}


 
 \newcommand{\C}     {\mathbb{C}} 
\newcommand{\R}     {\mathbb{R}} 
\newcommand{\Z}     {\mathbb{Z}} 
\newcommand{\N}     {\mathbb{N}}

\newcommand{\E}     {\mathbb{E}} 
 
\newcommand{\T}     {\mathbb{T}}

\newcommand{\Acal}   {{\mathcal A }}
\newcommand{\Bcal}   {{\mathcal B }}

\newcommand{\Ecal}   {{\mathcal E }} 
\newcommand{\Fcal}   {{\mathcal F }} 
\newcommand{\Gcal}   {{\mathcal G }}

\newcommand{\Mcal}   {{\mathcal M }}

\def\Afrak{\mathfrak{A}}
\def\qfrak{\mathfrak{q}}


 \newcommand{\com}{\textnormal{c}}

\newcommand{\tr}{{\operatorname {Tr}}}
\newcommand{\Exp}{\mathscr{E}\kern-0.2mm{\operatorname{xp}}}
\newcommand{\Log}{\mathscr{L}\kern-0.2mm{\operatorname{og}}}

\def\ex{{\rm e}}

\newcommand{\abs}[1]{{\lvert #1\rvert}}

\def\1{{\mathchoice {1\mskip-4mu\mathrm l}      
{1\mskip-4mu\mathrm l} 
{1\mskip-4.5mu\mathrm l} {1\mskip-5mu\mathrm l}}}



%
%
%
%

\theoremstyle{definition}

\theoremstyle{remark}

\theoremstyle{plain}
\numberwithin{equation}{section}

\newtheorem{classmain}{Theorem}[section]

\newtheorem{quantmain}[classmain]{Theorem}
\newtheorem{subadd}[classmain]{Lemma}
\newtheorem{incompat}[classmain]{Lemma}
\newtheorem{quantumestimates1}[classmain]{Lemma}
\newtheorem{quantumestimates2}[classmain]{Lemma}
\newtheorem{quantumestimates3}[classmain]{Lemma}
\newtheorem{quantumestimates4}[classmain]{Lemma}
\newtheorem{quantumRP}[classmain]{Lemma}

\definecolor{forestgreen}   {cmyk}{0.91, 0   , 0.88, 0.12}
 
\newcommand{\rk}[1]{{\color{forestgreen}{#1}}}

\title[Staggered long-range order for diluted quantum spin models]{Staggered long-range order\\ for diluted quantum spin models}
\author{Roman Koteck\'y\textsuperscript{1,2} and Benjamin Lees\textsuperscript{3}}

\begin{document}
\maketitle
\begin{center}
\textsuperscript{1} Mathematics Institute, University of Warwick, Coventry CV4 7AL, United Kingdom

\textsuperscript{2} CTS, Charles University, Prague, Czech Republic

\textsuperscript{3} Fachbereich Mathematik, Technische Universit\"at Darmstadt, Germany

\email{r.kotecky@warwick.ac.uk},  \ \   
\email{lees@mathematik.tu-darmstadt.de}
\end{center}

\maketitle

\date{22 August, 2018}

\begin{abstract}
\noindent
We study an annealed site diluted
quantum XY model with  spin $S\in\tfrac12\N$.
We find regions of the parameter space where, in spite of  being a priori favourable for a densely occupied state,  phases with staggered occupancy occur at low temperatures. 

\end{abstract}

\section{Introduction}
The quantum XY model with  spin $S\in\tfrac12\N$ on the square lattice $\Z^2 $
with a particular type of annealed site dilution is considered. 
We prefer to formulate the  model  in terms of a more symmetric equivalent version, with dilution represented by Ising spins instead of the site occupation numbers, with the~Hamiltonian 
\begin{equation}
\label{E:Hq}
H
=-\frac{1}{S^2}\sum_{\{ x,y\}}\sigma_x\sigma_y(S_x^1S_y^1+S_x^3S_y^3-S(S+1))-\kappa\sum_{\{ x,y\}}\sigma_x\sigma_y-\mu\sum_{x}\sigma_xS_x^3.
\end{equation}
Here  $S_{x}^j$'s are the standard spin-$S$ operators acting on the site $x$ (so in particular $S^1$ and $S^3$ are real matrices and $S^3$ is a diagonal matrix) and $\sigma_{x}$ is an Ising variable representing the particle at the site $x$ (the occupancy  number $n_x\in\{0,1\}$ indicating the presence/absence of a particle at $x$ corresponds to $\sigma_x=2 n_x-1 \in\{-1,1\}$).
The parameters   $\mu$  and $\kappa$  
allude to the chemical potential and the interaction parameter for the particles. 

Our main claim concerns the  existence  of a staggered long range occupancy order  
characterised by the presence of two distinct states (in the thermodynamic limit) which preferentially take Ising spin with value $+1$ on either the even or the odd sublattice. Indeed, it will be proven that such  states  occur in a region of parameters $\mu$ and $\kappa$, at intermediate inverse temperatures, $\beta$. 

The existence of such states can be viewed as a demonstration of an ``effective entropic repulsion'' caused by the interaction of  quantum spins leading to an impactful restriction of the ``available phase space volume''. As a result, occupation of adjacent sites might turn out to be unfavourable---it results  in an effective repulsion between particles in nearest neighbour sites and as a result eventually leads  to a staggered order.  
It is easy to understand  that this is  the case for the  annealed site diluted  Potts model with large number of spin states $q$ \cite{C-K-S}
where this effect is indeed caused by a pure entropic repulsion: two nearest neighbour occupied sites contribute the Boltzmann factor 
$q+q(q-1)e^{-\beta}$ which is at low temperatures much smaller than the factor $q^2$ obtained from two next nearest neighbour spins
that are free to take  all possible spin values entirely independently.  
Actually, the same is true---even though less obvious---in the case of diluted models with classical continuous spins  \cite{C-K-S2}.
Our result constitutes an extension  of  similar claims to a quantum situation.

To get a control on effective repulsion, we rely on a standard tool---the chessboard estimates which follow from reflection positivity.  The  classical references on this topic are \cite{D-L-S,F-I-L-S,F-I-L-S2,F-L,  F-S-S} with a recent review  \cite{B}. For our case the treatment in \cite{B-C-S} is especially useful. In particular, we use the setting from \cite[Section 3.3]{B-C-S} for an efficient formulation of the long range order in terms of  coexistence of the corresponding infinite-volume KMS states.

Note that we could also add a term $-uS_x^2S_y^2$ to our Hamiltonian and our result concerning reflection positivity would still hold for $u\leq 0$ (as $S^2$ is a purely imaginary matrix), we can consider our case as restricting ourselves only to the case $u=0$, the results might extend to models with $-1\leq u<0$ however it is not clear what estimates we can obtain in these cases (see Lemmas \ref{emptystaggeredestimate}-\ref{keyestimate}).

We introduce the model and state the main result in  Section \ref{sec settingandmainresults}. The  proof is  deferred to  Section \ref{sec quantproof}.
 
\section{Setting and Main Results}
\label{sec settingandmainresults}
For a fixed \emph{even} $L\in \N$, we  consider the \emph{torus} $\T_L=\Z^d/L\Z^d$ consisting of $L^d$ sites that can be identified with the set 
$(-L/2,L/2]^d\cap\Z^d$. On the torus $\T_L$ we take the \emph{algebra $\Afrak_L$ of observables} consisting of all functions $A: \{-1,1\}^{\T_L}\to \Mcal_L$ where $\Mcal_L$ is s the $C^*$-algebra of linear operators acting on the space 
$\otimes_{x\in\T_L}\C^{2S+1}$ with $S\in\frac{1}{2}\N$ (complex  $(2S+1)^{\abs{\T_L}}$-dimensional matrices). 

A particular example of an observable is the Hamiltonian $H_L\in \Afrak_L$ of the form \eqref{E:Hq} with the periodic boundary conditions (on the torus $\T_L$),
\begin{multline}
\label{E:HL}
H_L(\sigma)=-\frac{1}{S^2}\sum_{\{ x,y\}}\sigma_x\sigma_y(S_x^1S_y^1+S_x^3S_y^3-S(S+1))-\kappa\sum_{\{ x,y\}}\sigma_x\sigma_y-\mu\sum_{x\in\T_L}\sigma_xS_x^3.
\end{multline}
Here the sum is over pairs $\{x,y\}\in\E_L$, the set of all edges connecting nearest neighbour sites in the torus $\T_L$, and $(S^1,
S^3)$ are (two of) the spin-$S$ matrices.
The Gibbs state on the torus is given by
\begin{equation}\label{Gibbsstate}
\langle\cdot\rangle_{L,\,\beta}=\frac{1}{Z_L(\beta)}\sum_{\sigma}\tr\,\cdot \ex^{-\beta H_L}
\end{equation}
with $Z_L(\beta)=\sum_{\sigma}\tr\, \ex^{-\beta H_L}$.
Infinite volume states  of a quantum spin system are formulated in terms of KMS states,
an analog of DLR states for classical systems.
Let us briefly recall this notion in the form to be used in our situation. Here we follow closely the treatment from \cite{B-C-S} which can be consulted  for a more detailed discussion of KMS states in a setting similar to ours. Let $\Afrak$ denote the \emph{$C^*$ algebra of quasilocal observables},
\begin{equation}
\Afrak=\overline{\Afrak_0},\qquad\text{ where }\quad \Afrak_0=\bigcup_{\Lambda\subset\Z^d \text{ finite}}\Afrak_\Lambda,
\end{equation}
where the overline denotes the norm-closure. We define the \emph{time evolution operators} $\alpha_t^{(L)}$ acting on $A\in\Afrak_L$ and for any $t\in \R$ as
\begin{equation}
\alpha_t^{(L)}(A)=\ex^{it H_L}A\ex^{-itH_L}.
\end{equation}
It is well known that for a local operator $A\in\Acal_0$ we can expand $\alpha^{(L)}_t(A)$ as a series of commutators,
\begin{equation}
\alpha^{(L)}_t(A)=\sum_{m\geq 0}\frac{(it)^m}{m!}[H_L,[H_L,...,[H_L,A]...]].
\end{equation}
The map $t\to  \alpha^{(L)}_t$ extends to all $t\in\C$ and, as $L\to\infty$, $\alpha_t^{(L)}$ converges in norm to an operator $\alpha_t$ on $\Afrak$ uniformly on compact subsets of $\C$ (one can consult the proof, for example,  in \cite{R} and see that the same proof structure works in this case). A state $\langle\cdot\rangle_\beta$ on $\Afrak$ (a positive linear functional ($\langle  A\rangle_\beta\ge 0$ if $A\ge 0$) such that $\langle   \mathds{1}\rangle_\beta= 1$)
is called a KMS state (or is said to satisfy the KMS condition) with a Hamiltonian $H$  at an inverse temperature $\beta$, if we have
\begin{equation}\label{KMScond}
 \langle AB\rangle_\beta=\langle\alpha_{-i\beta}(B)A\rangle_\beta
\end{equation}
for the above defined family of operators $\alpha_t$ at imaginary values $t=-i\beta$.
One can see that the Gibbs state \eqref{Gibbsstate} satisfies the KMS condition for the finite volume time evolution operator.

A special class of \emph{observables} are classical events  $\mathds{1}_{\Fcal} I$ obtained as a product 
of the identity $I\in \Mcal_L$ with the indicator $\mathds{1}_{\Fcal}$ of an Ising configuration event  $\Fcal\subset \{-1,1\}^{\T_L}$.
Often we will consider (classical) block  events  depending only on the Ising configuration on the block-cube of $2^d$ sites, $C=\{0,1\}^d\subset \T_L$. Namely, the events of the form $\Ecal\times \{0,1\}^{\T_L\setminus C}$
where $\Ecal \subset \{0,1\}^{C}$. We will refer to these events  directly as block events $\Ecal$ and use a streamlined 
 notation $\langle\Ecal\rangle_{L,\,\beta}$ (resp. $\langle\Ecal\rangle_{\beta}$)  instead of 
$\langle\mathds{1}_{\Ecal\times \{0,1\}^{\T_L\setminus C}} I\rangle_{L,\,\beta}$ (resp. $\langle\mathds{1}_{\Ecal\times \{0,1\}^{\T_L\setminus C}} I\rangle_{\beta}$).

In particular, to characterise the long-range order states mentioned above, we introduce the block events 
$\Gcal^{\mspace{1mu}\textnormal{e}}= \{\sigma^{\textnormal{e}}\}$ and $ \Gcal^{\mspace{1mu}\textnormal{o}}= \{\sigma^{\textnormal{o}}\}$
where $\sigma^{\textnormal{e}}$ and $\sigma^{\textnormal{o}}$ are the even and the odd staggered configurations on $C$:
$\sigma^{\textnormal{e}}_{x}=1$ iff $x$ is an even site in $C$ and $\sigma^{\textnormal{e}}_{x}=1$ iff $x$ is an odd site in $C$. 
 Notice that the sets $\Gcal^{\mspace{1mu}\textnormal{e}}$ and $\Gcal^{\mspace{1mu}\textnormal{o}}$ are disjoint.

The main result for the quantum system with Hamiltonian \eqref{E:HL} can now be stated as follows.

\begin{quantmain}\label{quantumchessboardstates}
Let $d=2$ and $S\geq \frac{1}{2}$. Let $\mu_0=\frac12 \frac{S+1}{S^2}$ and $\kappa_0=\kappa_0(\mu)=\frac{S+1}{S}-2\abs{\mu} S$. Then, for any  $|\mu|<\mu_0$, $\kappa<\kappa_0(\mu)$, and any $0<\varepsilon<\frac{1}{2}$,  there exists $\beta_0=\beta_0(\mu,\kappa,\varepsilon)$ such that 
 for 
any $\beta>\beta_0$ there exist two distinct  KMS states, $\langle\cdot\rangle_\beta^{\textnormal{e}}$ and $\langle\cdot\rangle_\beta^{\textnormal{o}}$, that are staggered,
\begin{equation}
\langle \Gcal^{\mspace{1mu}\textnormal{e}}\rangle_\beta^{{\textnormal{e}}}\geq 1-\varepsilon	
\text{ and  }
\langle \Gcal^{\mspace{1mu}\textnormal{o}}\rangle_\beta^{{\textnormal{o}}}\geq 1-\varepsilon.
\end{equation}
\end{quantmain}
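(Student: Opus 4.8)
The plan is to establish staggered long-range order via the chessboard estimate machinery, in the form adapted to KMS states in \cite[Section~3.3]{B-C-S}. First I would verify reflection positivity: the torus Hamiltonian \eqref{E:HL} has nearest-neighbour interactions and both the spin part (for $u=0$, $S=\frac12$ and for $u=-1$, $S\ge\frac12$, after the standard unitary on one sublattice that makes the off-diagonal couplings have the right sign) and the occupancy part $-\mu\sum n_{\bx}-\kappa\sum n_{\bx}n_{\by}$ are of the type covered by reflection positivity through hyperplanes perpendicular to the coordinate axes (both through sites and through edges), so the Gibbs state $\langle\cdot\rangle_{L,\beta}$ is RP. This yields the chessboard estimate: for any block event $\Ecal$ on the cube $C$, $\langle\Ecal\rangle_{L,\beta}\le \mathfrak{z}_L(\Ecal)$, where $\mathfrak{z}_L(\Ecal)$ is the ``universal'' quantity obtained by tiling the torus with reflected copies of $\Ecal$ and taking the $1/L^d$ power of the resulting constrained partition function divided by $Z_L(\beta)$.

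\medskip
Next I would run the standard Peierls-type contour argument in this chessboard language. Call a block ``good'' if it is in $\Gcal^{\mspace{1mu}\textnormal{e}}\cup\Gcal^{\mspace{1mu}\textnormal{o}}$ (a perfectly staggered plaquette) and ``bad'' otherwise; partition the bad blocks into a few types according to which local pattern they exhibit (e.g., a fully empty plaquette, a plaquette with two occupied nearest neighbours, a plaquette that is ``mostly occupied'' but not perfectly staggered, an ``interface'' type between the even-staggered and odd-staggered patterns, etc.). The key input, which I expect to be supplied by Lemmas~\ref{emptystaggeredestimate}--\ref{keyestimate} of the paper, is a bound of the form $\mathfrak{z}_L(\Ecal_{\mathrm{bad}})\le \delta(\beta,\mu,\kappa)$ for each bad type, with $\delta\to 0$ as $\beta\to\infty$ in the stated parameter region $\mu>0$, $\kappa<\max(\kappa_0(\mu),0)$. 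Heuristically this is where the ``entropic repulsion'' enters: a perfectly staggered plaquette lets each occupied spin fluctuate freely (it has no occupied neighbours inside the Hamiltonian's quantum term), contributing a large factor, while any bad plaquette either pays in chemical potential (too few particles) or pays by having a bond with two occupied endpoints, which costs the quantum spin factor $\tr\,\ex^{\beta(\text{bond})}$ relative to the ``free'' factor $(2S+1)$; one must check that $\mu$ can be tuned (with $\mu<\mu_0$, $\kappa$ correspondingly small) so that these two costs cannot both be avoided. Summing over contours separating a region where the even-staggered pattern dominates from one where the odd pattern dominates, a standard Peierls bound gives, for $\beta$ large, $\langle \Gcal^{\mspace{1mu}\textnormal{e}}\rangle_{L,\beta}+\langle\Gcal^{\mspace{1mu}\textnormal{o}}\rangle_{L,\beta}\ge 1-\varepsilon$ uniformly in $L$, hence by symmetry of the torus measure under sublattice swap, each term can be made $\ge \tfrac12(1-\varepsilon)$, and more importantly the long-range-order two-point quantity $\langle \Gcal^{\mspace{1mu}\textnormal{e}}(0)\Gcal^{\mspace{1mu}\textnormal{e}}(x)\rangle$ stays bounded below.

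\medskip
Then I would pass to infinite volume and produce the two distinct KMS states. Following \cite[Section~3.3]{B-C-S}, one takes the finite-volume Gibbs states $\langle\cdot\rangle_{L,\beta}$ and extracts a weak-$*$ convergent subsequence; any limit is a KMS state for $(\Afrak,\alpha_t,\beta)$ because $\alpha_t^{(L)}\to\alpha_t$ in norm uniformly on compacts (as recalled in the excerpt) and the finite-volume KMS condition \eqref{KMScond} passes to the limit. To separate the even and odd phases, I would condition (or add a vanishing symmetry-breaking boundary term / use the state obtained as a limit with a pinned staggered configuration far away) so as to obtain a limit state $\langle\cdot\rangle_\beta^{\textnormal{e}}$ with $\langle\Gcal^{\mspace{1mu}\textnormal{e}}\rangle_\beta^{\textnormal{e}}\ge 1-\varepsilon$; the Peierls bound above guarantees such conditioning does not destroy the KMS property in the limit (the correction is a boundary effect that vanishes). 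The sublattice-shift automorphism then maps this to $\langle\cdot\rangle_\beta^{\textnormal{o}}$ with $\langle\Gcal^{\mspace{1mu}\textnormal{o}}\rangle_\beta^{\textnormal{o}}\ge 1-\varepsilon$, and since $\Gcal^{\mspace{1mu}\textnormal{e}}\cap\Gcal^{\mspace{1mu}\textnormal{o}}=\emptyset$ while $\langle\Gcal^{\mspace{1mu}\textnormal{o}}\rangle_\beta^{\textnormal{e}}\le\varepsilon<\tfrac12\le 1-\varepsilon\le\langle\Gcal^{\mspace{1mu}\textnormal{o}}\rangle_\beta^{\textnormal{o}}$, the two states are distinct.

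\medskip
\textbf{Main obstacle.} The analytic heart is the chessboard bound on bad-block types, i.e.\ the estimates of Lemmas~\ref{emptystaggeredestimate}--\ref{keyestimate}. For the classical Potts or continuous-spin cases the ``free spin'' factor is an explicit number; here one must bound quantum partition functions $\tr\,\ex^{-\beta H_\Lambda}$ over periodically-tiled occupancy patterns and show the perfectly staggered tiling beats every competitor by an exponential factor in $\beta$. Concretely, the delicate point is controlling, uniformly in the tiling, the ratio of $\tr\,\ex^{\beta(\text{quantum bond term})}$ for a bond with two occupied endpoints to $(2S+1)^2$ — for $u=-1$ this is a Heisenberg bond whose operator norm is explicit, and for $u=0$, $S=\tfrac12$ it is the $XY$ bond — and then organising a Peierls counting in which each unit of contour length carries at least one such suppressed bond or one chemical-potential deficit. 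Getting the precise threshold $\kappa_0(\mu)$ and the interval $(0,\mu_0)$ out of this balance (so that the densely-occupied state is a priori favoured but nonetheless loses) is the real work, and is exactly what the cited lemmas are expected to deliver; the rest is the standard reflection-positivity/Peierls/limit-state package.
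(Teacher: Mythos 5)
Your outline follows essentially the same route as the paper: reflection positivity of the annealed Gibbs state, chessboard estimates, a Peierls contour bound showing that bad blocks and mixed even/odd pairs are improbable, and then the construction of the two KMS states by conditioning on a staggered pattern localised far from the observables, exactly in the spirit of \cite[Section~3.3]{B-C-S}. Your final step (distinctness via $\Gcal^{\mspace{1mu}\textnormal{e}}\cap\Gcal^{\mspace{1mu}\textnormal{o}}=\emptyset$ and $\varepsilon<\tfrac12$) and the identification of the key difficulty are both correct.

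Three caveats. First, you assert reflection positivity ``both through sites and through edges''; for these quantum Hamiltonians only reflections through planes \emph{between} sites are available (the paper explicitly avoids on-site reflections), and the verification for $u\le 0$ goes through writing the cross-plane coupling as $-\sum_\alpha D_\alpha\overline{\theta D_\alpha}$ with $D_\alpha\in\{\tfrac1{S^2}n_{\bx}S^1_{\bx},\ \tfrac{i\sqrt{-u}}{S^2}n_{\bx}S^2_{\bx},\ \tfrac1{S^2}n_{\bx}S^3_{\bx}\}$, no sublattice rotation needed at this stage. Fortunately only between-site reflections are used, so this is harmless. Second, your Peierls step glosses over a real subtlety: a block on the outer side of the contour may be perfectly \emph{odd}-staggered, hence good in the original partition; the badness then lives only in the block shifted by one lattice unit, which belongs to a \emph{different} $2\times2$ partition. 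One must therefore sort the $\ge n/(2d)$ selected blocks into the original and the shifted partition, keep the majority, and apply the chessboard estimate within a single partition; this costs the combinatorial factor $2^{\abs{\gamma}/(2d)+1}$ in the paper. Third, and most substantively, the quantitative core --- the bounds $\mathfrak{q}_{L,\beta}(\{\bn^{(k)}\})\le \ex^{\beta(\cdots)}$ that produce the explicit $\mu_0$ and $\kappa_0(\mu)$ --- is left entirely as a black box in your write-up. In the paper these are obtained by tiling each disseminated occupancy pattern with star operators $B^{(\ell)}_{\bx_0}=-\mathbf{S}_{\bx_0}\cdot\sum_{k\le\ell}\mathbf{S}_{\bx_k}$ (resp.\ their XY analogues) and invoking the Dyson--Lieb--Simon eigenvalue bounds $S(\ell S+1)$ (resp.\ $\tfrac12\sqrt{m(m+1)}$, $\tfrac12 m$), together with the exact evaluation $Z^{\mspace{1mu}\textnormal{e}}_L=(2S+1)^{L^2}\ex^{\beta\mu L^2/2}$ for the staggered reference pattern. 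Since the theorem's content is precisely the existence of the window $(0,\mu_0)$, $\kappa<\kappa_0(\mu)$ where the dense phase loses, a complete proof cannot defer this computation.
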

The proof of this theorem is the content of Section \ref{sec quantproof}. For the technical estimates, we are restricting ourselves to  the two-dimensional case  $d=2$. The proof of a similar claim for $d> 2$ (with other $\mu_0$ and $\kappa_0$ depending on $d$) employing the same methods is straightforward but rather cumbersome. 

Notice that for $|\mu|<\mu_0$ we have $\kappa_0(\mu)>0$. It is not so surprising that 
 that the claim is true for any negative $\kappa$---negative $\kappa$ should trigger antiferromagnetic staggered
order at low temperatures. More interesting is the case, established  by the theorem,  when this happens for positive $\kappa$ where it is a demonstration of an effective  entropic repulsion stemming from the quantum spin.  

\section{Proof of Theorem \ref{quantumchessboardstates}}
\label{sec quantproof}
\subsection{Reflection Positivity for the Annealed Quantum Model}

Consider now a splitting of the torus $\T_L$ into two disjoint halves, $\T_L=\T_L^+\cup\T_L^-$,  separated by a pair of planes; for example say, $P_1=\{(-1/2,x_2,\dots,x_d)$ and $P_2=
\{(L/2-1/2,x_2,\dots,x_d)$, $x_2,\dots, x_d\in \R$.
We introduce a reflection $\theta :\T_L\to \T_L$ defined by $\theta x=(-(x_1+1),x_2,\dots,x_d) $.
\footnote{Notice that on the torus, 
the reflection with respect to $P_1$ is identical with that with respect to $P_2$ (just notice that  $\abs{x_1-(-1/2)}= \abs{y_1-(-1/2)} $ with $x_1\neq y_1$ implies $y_1=-(x_1+1)$,
while $\abs{x_1-(L/2-1/2)}= \abs{y_1-(L/2-1/2)}$ with $x_1\neq y_1$ implies $y_1=-(x_1+L+1)$ and $-(x_1+1)= -(x_1+L+1) \mod(L)$.}
Any such reflection (parallel $P_1$ and $P_2$ of distance $L/2$ in arbitrary  half-integer position and orthogonal to any coordinate axis) will be called  \emph{reflections through planes between the sites} or simply \emph{reflections} (we will not use the other reflections through planes on the sites that are useful for classical models). Notice that $\theta$ maps $\T_L^+$ into $ \T_L^- $ and $\theta^2=1$.

Further,  consider an algebra  $\Afrak_L$  with two subalgebras  $\Afrak_L^+, \Afrak_L^-\subset \Afrak_L$,
$\Afrak_L=\Afrak_L^+\otimes \Afrak_L^-$,   living on the sets $\T_L^+, \T_L^- $, respectively.
Namely, we define $\Afrak_L^+$ as a set of all operator-valued functions $A: \{-1,1\}^{\T_L^+}\to \Mcal_L^+$, where
$\Mcal_L^+$ is the set of all   operators of the form $I\otimes A^+$ with $A^+$
acting on the subspace $\otimes_{x\in\T_L^+}\C^{2S+1}$ and $ I$ is the identity on the complementary space $\otimes_{x\in\T_L^-}\C^{2S+1}$.  Similarly for $ \Afrak_L^-$.

The reflection $\theta :\T_L^-\to \T_L^+$ can be naturally elevated to a morphism  $\theta: \Afrak_L^+\to \Afrak_L^-$ (cf.  twisted reflections in \cite[Section 3.4]{F-I-L-S}) with
$\theta$ flipping the spin in the Ising configuration and rotating by $\pi$ in the second coordinate direction of spins $S_x$. More precisely, define the unitary operator 
\begin{equation}\label{eq:Uop}
U=\prod_{x\in \T_L^{\rk{-}}} e^{i\pi S_x^2}
\end{equation}
on the subspace $\otimes_{x\in\T_L^-}\C^{2S+1}$
and, for $\sigma\in\{-1,1\}^{\T_L}$, define $\theta \sigma$ by 
\begin{equation}
\label{E:sigma}
(\theta \sigma)_x=-\sigma_{\theta x}.
\end{equation}
 Then for $A\in\Afrak_L^+$ 
with $A(\sigma) = I \otimes A^+(\sigma)$ 
for any  $\sigma\in \T_L^+$,  we define the operator $\theta A\in \Afrak_L^-$ by  
\begin{equation}
\label{E:thetaA}
\theta A(\sigma)=\overline{U^{-1}A^+(\theta \sigma)U}\otimes I, \sigma \in \T_L^-  .
\end{equation}
Here $\overline A$ denotes the complex conjugation of the operator $A$.

Note the effect of the reflection on  spin operators:  for any $j\in\{1,2,3\}$ and $x\in\T_L^+$,  we have 
$\overline{U^{-1}S_x^jU}=-S_x^j$ and thus
\footnote{Actually, the Hamiltonian \eqref{E:HL} depends only on the spin operators  $S_x^1$ and $S_x^3$.
 Their standard representation is by real matrices and thus the the complex conjugation in  \eqref{E:thetaA} can be skipped for them.}
 \begin{equation}
 \label{E:S}
\theta S_x^j= - S_{\theta x}^j.
\end{equation}
Similarly, for the operator $A(\sigma)=S^3_x\sigma_x$, we have 
 \begin{equation}
  \label{E:Ssigma}
\theta A(\sigma)=(-S^3_{\theta x})(-\sigma_{\theta x})=S^3_{\theta x}\sigma_{\theta x}
\end{equation}
and for the operator $A(\sigma)= \sigma_x i I$ with $i I$ the multiple of a unit matrix by the imaginary unit $i$, we have 
 \begin{equation}
  \label{E:iI}
\theta A(\sigma)=(-\sigma_{\theta x})(-i I )= i \sigma_{\theta x}\, I.
\end{equation}

Finally, we say that a state $\langle \bcdot\rangle$ on $\Afrak_L $ is reflection positive with respect to  $\theta$ 
if for any $A, B\in \Afrak_L^+$ we have
\begin{equation}
\label{E:RPineq}
\left\langle A \theta B\right\rangle= \left\langle B\theta A\right\rangle
\end{equation}
and 
\begin{equation}
\label{E:RPineq2}
\left\langle A\theta A\right\rangle\geq 0.
\end{equation}
The standard consequence of the reflection positivity is the Cauchy-Schwarz inequality
\begin{equation}\label{E:CS}
\langle A \theta B\rangle^2\leq\langle A \theta A\rangle\langle B \theta B \rangle
\end{equation}
for any $A, B\in \Afrak_L^+$.

In our situation of an annealed diluted quantum model, we are dealing with the state
\begin{equation}
  \left\langle A\right\rangle_{L,\,\beta}= 
\frac{ \sum_{\sigma\in \{-1,1\}^{\T_L}}  \tr\, A(\sigma)\ex^{-\beta H_L(\sigma)}}{\sum_{\sigma\in \{-1,1\}^{\T_L}}\tr\, \ex^{-\beta H_L(\sigma)}}
 \end{equation}
for any $A\in \Afrak_L$ and with the Hamiltonian $H_L\in \Afrak_L$ of the form \eqref{E:HL}. 

The standard proof of reflection positivity may be extended to this case.

\vspace{0.3cm}
\begin{quantumRP} 
The state  $ \langle \bcdot \rangle_{L,\,\beta}$   is reflection positive for any $\theta $ through planes between the sites
and any  $\mu\in\mathbb{R}$, $\kappa\leq \tfrac{S+1}{S}$ and $\beta\ge 0$.   
\end{quantumRP}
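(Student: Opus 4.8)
\emph{Strategy.} The plan is to run the classical reflection‑positivity scheme of Dyson--Lieb--Simon and Fröhlich--Israel--Lieb--Simon \cite{D-L-S,F-I-L-S,F-L}, carrying the occupation configuration $\bn$ through the argument as a classical parameter. Fix a reflection $\theta$ through a pair of planes between the sites, write $\T_L=\T_L^+\cup\T_L^-$ and $\Hcal^\pm=\otimes_{\bx\in\T_L^\pm}\C^{2S+1}$, and for each $\bn=(\bn_+,\bn_-)$ with $\bn_\pm=\bn|_{\T_L^\pm}$ split
\[
H_L(\bn)=H^+(\bn_+)+H^-(\bn_-)+H^0(\bn),
\]
where $H^+$ collects all bond terms with both endpoints in $\T_L^+$ (keeping their $-S(S+1)$ offsets), the single‑site terms $-\mu\sum_{\bx\in\T_L^+}n_\bx$, and the $\kappa$‑terms of edges inside $\T_L^+$; $H^-$ is the analogous operator on $\T_L^-$; and $H^0$ collects exactly the contributions of the edges crossing the two reflection planes. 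Working in the representation in which $S^1,S^3$ are real symmetric and $S^2$ is purely imaginary, $H^+(\bn_+)$ is a real symmetric operator on $\Hcal^+$, $H^-(\bn_-)=\overline{\theta H^+}$ is its $\theta$‑image on $\Hcal^-$, and $[H^+,H^-]=0$ since they act on complementary tensor factors.

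\emph{Gram form of the crossing interaction.} For fixed $\bn$ the occupation‑diagonal part of $-\beta H^0(\bn)$ — that is, the $-S(S+1)$ contributions and the $\kappa$‑terms of the crossing edges — is a scalar multiple of the identity (an explicit function of the number of crossing edges both of whose endpoints are occupied), so it factors out of every trace as a strictly positive constant and places no constraint on $\mu,\kappa$. What is left is
\[
-\beta H^0(\bn)=(\text{positive scalar})\,I+\sum_{\gamma}b_\gamma\,C_\gamma(\bn)\,\overline{\theta C_\gamma(\bn)},\qquad b_\gamma\ge 0,
\]
with the sum over pairs (crossing edge $\langle\bx,\theta\bx\rangle$, spin component $i$) and $C_\gamma(\bn)$ acting on $\Hcal^+$: using $\overline{\theta S_\bx^i}=S_{\theta\bx}^i$ for $i=1,3$ and $\overline{\theta S_\bx^2}=-S_{\theta\bx}^2$, the three spin cross‑terms of a crossing edge become $\tfrac{\beta}{S^2}n_\bx n_{\theta\bx}\,S_\bx^1\overline{\theta S_\bx^1}$, $-\tfrac{\beta u}{S^2}n_\bx n_{\theta\bx}\,S_\bx^2\overline{\theta S_\bx^2}$, $\tfrac{\beta}{S^2}n_\bx n_{\theta\bx}\,S_\bx^3\overline{\theta S_\bx^3}$, the middle coefficient being nonnegative precisely because $u\le 0$. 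Since $n_\bx\in\{0,1\}$ is its own square root, the occupancy weight $n_\bx n_{\theta\bx}$ of a crossing edge splits into a factor at $\bx$ absorbed into $C_\gamma$ (so that $C_\gamma\in\Afrak_L^+$) times a factor at $\theta\bx$ absorbed into $\overline{\theta C_\gamma}$; this splitting of the crossing occupancy weights is the one genuinely new ingredient relative to the purely quantum case.

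\emph{Duhamel expansion and factorisation.} I would then expand $e^{-\beta H_L(\bn)}=e^{-\beta(H^++H^-+H^0)}$ by the Duhamel (Dyson) series in $H^0$, using $e^{-t(H^++H^-)}=e^{-tH^+}e^{-tH^-}$; each term is a time‑ordered integral over a simplex of a product alternating the commuting factors $e^{-\cdot H^+}e^{-\cdot H^-}$ and $\sum_\gamma b_\gamma C_\gamma\overline{\theta C_\gamma}$. Because every operator on $\Hcal^+$ commutes with every operator on $\Hcal^-$, because $\tr=\tr_{\Hcal^+}\tr_{\Hcal^-}$, and because $\tr_{\Hcal^-}\circ\overline{\theta(\cdot)}=\overline{\tr_{\Hcal^+}(\cdot)}$ together with the multiplicativity of $\theta$ and of complex conjugation, each term factors into a trace over $\Hcal^+$ built from $A$, $H^+$ and the $C_\gamma$, times the complex conjugate of the same expression with $A^+,H^+$ evaluated at the reflected configuration. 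Summing over $\bn=(\bn_+,\bn_-)$ and substituting $\bn_-\mapsto\theta\bn_-$, the surviving $\bn$‑dependence (the diagonal crossing weights and the occupancy factors absorbed into the $C_\gamma$) reorganises into products of a $\bn_+$‑factor and a $\bn_-$‑factor, and one is left with
\[
\langle A\,\overline{\theta A}\rangle_{L,\beta}=\frac1{Z_L(\beta)}\sum_{k\ge0}\int_{\Delta_k}\!ds\sum_{\vec\gamma}w_{\vec\gamma}\,\Bigl|\,\sum_{\bm}a_{\vec\gamma}(\bm)\,\tr_{\Hcal^+}\bigl[A^+(\bm)\,U^+_{\vec\gamma}(\bm;s)\bigr]\,\Bigr|^{2}\ \ge 0,
\]
with $w_{\vec\gamma},a_{\vec\gamma}\ge 0$; running the identical computation with a pair $A,B$ gives \eqref{E:RPineq}. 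The routine points — convergence of the series in this finite‑dimensional setting, the bookkeeping of $\theta$ acting simultaneously on configurations and on matrices, and the identification $H^-=\overline{\theta H^+}$ — are standard. I expect the main obstacle to be exactly this last step: organising the classical occupation sum so that all occupancy factors (the diagonal crossing weights together with the ones absorbed into the $C_\gamma$) genuinely reassemble into a square, which is the place where the classical, annealed nature of the dilution — rather than the quantum machinery — does the work.
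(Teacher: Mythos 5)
Your strategy is the same as the paper's: isolate the crossing terms, write the genuinely quantum part of the crossing interaction as $\sum_\gamma b_\gamma C_\gamma\overline{\theta C_\gamma}$ with $b_\gamma\ge 0$ (this is where $u\le 0$ and $\overline{iS^2}=iS^2$ enter), use $n_{\bx}=n_{\bx}^2$ to split the occupancy weight of a crossing edge between the two halves, and expand the exponential (the paper uses Trotter where you use Duhamel; this difference is immaterial). All of that is fine and matches the paper. The gap is exactly where you yourself predicted the difficulty would be: you assert that the occupation-diagonal part of $-\beta H^0$, i.e.\ the factor
\[
\exp\Bigl\{\beta\bigl(\kappa-\tfrac{S(S+1)}{S^2}\bigr)\sum_{\text{crossing }\langle \bx,\theta\bx\rangle} n_{\bx}n_{\theta\bx}\Bigr\},
\]
``factors out of every trace as a strictly positive constant and places no constraint on $\mu,\kappa$''. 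It does not factor out: it depends on $\bn$ and couples $\bn_+$ to $\bn_-$, so it must itself be decomposed as $\sum_k f_k(\bn_+)\,\overline{f_k(\theta^{-1}\bn_-)}$ along with everything else. For $\{0,1\}$-valued occupancies, $\ex^{c\,nm}=1+(\ex^{c}-1)nm$ admits such a decomposition with nonnegative coefficients if and only if $c\ge 0$ (the $2\times 2$ matrix $(\ex^{c\,nm})_{n,m}$ has determinant $\ex^{c}-1$). Here $c=\beta(\kappa-S(S+1)/S^2)$, so the step as you state it requires $\kappa\ge S(S+1)/S^2$; a positive single-site factor $\ex^{g(\bn_+)+g(\bn_-)}$ can indeed be absorbed harmlessly, but a positive crossing factor $\ex^{c\,n_{\bx}n_{\theta\bx}}$ with $c<0$ cannot, and this is precisely the regime in which Theorem \ref{quantumchessboardstates} is applied.

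You should know that the paper's own proof makes the identical move: it declares $\sum_{\langle \bx,\by\rangle}n_{\bx}n_{\by}(\kappa-S(S+1)/S^2)$ to be ``a constant times the identity for each $\bn$'', bounds it uniformly, and drops it. So your proposal is a faithful reconstruction of the published argument, this step included. But a uniform bound on a $\bn$-dependent diagonal term is not a substitute for exhibiting it as a sum of squares, and for the crossing edges it is not one when $\kappa<S(S+1)/S^2$. As written, both your argument and the paper's establish reflection positivity only for $\kappa\ge S(S+1)/S^2$ (the terms interior to one half genuinely impose nothing); closing the case of smaller $\kappa$ needs an additional idea, not the assertion that the term is harmless.
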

\begin{proof}
The equality \eqref{E:RPineq} is immediate. For \eqref{E:RPineq2} 
 we first write the Hamiltonian $H_L$ in the form $H_L(\sigma,\theta  \sigma')= J(\sigma)+ \theta J(\sigma') -\sum_\alpha D_\alpha(\sigma)\, \theta  D_\alpha(\sigma')$ for any $\sigma,\sigma'\in \{-1,1\}^{\T_L^+}$
where $J\in \Afrak_L^+$ consists of  all terms of the Hamiltonian with (both) sites in $\T_L^+$ and $D_\alpha  \theta  D_\alpha$, with 
$D_\alpha\in \Afrak_L^+$   indexed by $\alpha$, represent the terms corresponding to edges crosses the reflection plane. 

Indeed, we define
\begin{equation}
J(\sigma)=-\frac{1}{S^2}\sum_{\substack{\{ x,y\}\\x,y\in\T_L^+}}\sigma_x\sigma_y(S_x^1S_y^1+S_x^3S_y^3-S(S+1))-\kappa\sum_{\substack{\{ x,y\}\\x,y\in\T_L^+}}\sigma_x\sigma_y-\mu\sum_{x\in\T_L^+}\sigma_xS_x^3
\end{equation}
and note that, due to the definition of $\theta$, $\theta J(\sigma)$ is the same as $J(\sigma)$ but with $\T_L^+$ replaced by $\T_L^-$. This is clear for the first two sums as we pick up four resp. two factors of $-1$, for the last term note that we also pick up two factors of $-1$, one from $\theta S^1_x=-S^1_{\theta x}$ and one from $\theta \sigma_x=-\sigma_{\theta x}$.
 If $\{x,y\}$ is an edge crossing the reflection plane (i.e. $x\in\T_L^x$, $y=\theta x\in\T_L^-$), the corresponding $D_\alpha$'s are
\begin{align}
D^0_x=&\sqrt{\tfrac{S+1}{S}-\kappa}\, i\, \sigma_x
\\
D^1_x=&\frac1S \sigma_x\,S_x^1
\\
D^3_x=&\frac1S \sigma_x\,S_x^3
\end{align}
If $\kappa\leq \tfrac{S+1}{S}$, we have 
\begin{equation}
(\tfrac{S+1}{S}-\kappa)\, \sigma_x\sigma_{y}= -D_x^0\,  \theta (D_x^0)
\end{equation}
since,
in view of \eqref{E:sigma} and \eqref{E:iI},
\begin{equation}
\sigma_x\sigma_{y}=- i \sigma_x\, i \sigma_y= - i \sigma_x\, \theta(i \sigma_x).
\end{equation}
Also $\sigma_xS^{j}_x\sigma_y S^{j}_y=\sigma_xS^{j}_x  \theta(\sigma_xS^{j}_x)$ for $j=1,3$.

 For the claim \eqref{E:RPineq2}  we need to show that 
 \begin{equation}
 \sum_{\sigma, \sigma'\in \{-1,1\}^{\T_L^+}}\tr\, A(\sigma) \theta A(\sigma')\ex^{-\beta H_L(\sigma,\theta \sigma')}\geq 0
 \end{equation}
 for any $A\in \Afrak_L^+$.
Adapting  the standard proof, see e.g. \cite[Theorem 2.1]{F-L},  
by Trotter's formula we get
 \begin{equation}
  \ex^{-\beta H_L(\sigma,\theta \sigma')}=\lim_{k\to\infty}\Bigl(\ex^{-\frac{\beta}{k}J(\sigma)}\ex^{-\frac{\beta}{k} \theta J(\sigma')}\bigl[1+\tfrac{\beta}{k}\sum\nolimits_\alpha D_\alpha (\sigma) \theta D_\alpha(\sigma')\bigr]\Bigr)^k=:\lim_{k\to\infty}F_k(\sigma,
 \sigma').
 \end{equation}
The needed claim will be verified once show that
\begin{equation}\label{E:RPproof}
 \sum_{\sigma,\sigma'\in \{-1,1\}^{\T_L^+}}\tr\, \left(A(\sigma\theta A(\sigma')\,F_k(\sigma,\sigma') \right)\geq 0 
\end{equation}
for all $k$. 

Indeed, proceeding exactly in the same way as in the proof of Theorem 2.1 in \cite{F-L}, we can conclude that for each $\sigma,\sigma' \in \{-1,1\}^{\T_L^+}$ the operator  $F_k(\sigma,\sigma')$ can be written as a sum of terms of the form  $F_k^{(\ell)}(\sigma) \theta F_k^{(\ell)}(\sigma')$, where $F_k^{(\ell)}\in\Afrak_L^+$. Each such term yields
 \begin{multline}
 \sum_{\sigma,\sigma'\in \{-1,1\}^{\T_L^+}}\tr(A(\sigma)  \theta A(\sigma') F_k^{(\ell)}(\sigma) \theta F_k^{(\ell)}(\sigma')=\\=
  \sum_{\sigma,\sigma'\in \{-1,1\}^{\T_L^+}}   \tr(A(\sigma) F_k^{(\ell)}(\sigma)\theta (AF_k^{(\ell)})(\sigma')=  \Biggl|\sum_{\sigma\in \{-1,1\}^{\T_L^+}}\tr\bigl(A(\sigma) F_k^{(\ell)}(\sigma)\bigr)\Biggr|^2\ge 0
\end{multline}
thus completing the proof.
\end{proof}

\subsection{Chessboard estimates}

Consider $\T_L$ partitioned into $(L/2)^d$ disjoint $2\times 2\times\dots\times 2$ blocks $C_{\tau}\subset \T_L$ labeled by vectors $\tau \in \T_{L/2}$ with $2\tau$ denoting the position of their lower left corner. Clearly, $C_{\tau}=C+2 \tau$
with $C_{\b0}=C$. 

If $\tau\in\T_{L/2}$ with $|\tau|=1$, we let $\theta _{\tau}$ be the reflection with respect to the plane between $C$ and  $C_{\tau}$ corresponding to $\tau$. Further, 
if $\Ecal$ is a block event, $\Ecal\subset\{-1,1\}^C$,  we let $\vartheta_{\tau}(\Ecal)\subset\{-1,1\}^{C_{\tau}}$ be the correspondingly reflected event,
$\sigma\in \Ecal$ iff $\theta\sigma\in\vartheta_{\tau}(\Ecal)$. For other $\tau$'s in $\T_{L/2}$ we define $\vartheta_{\tau}(\Ecal)$ by a sequence of reflections (note that the result does not depend on the choice of sequence leading from $C$ to $C_{\tau}$.). If all coordinates  of $\tau$ are even this simply results in the translation by $2\tau$. 

Chessboard estimates are   formulated in terms of a mean value of a homogenised pattern based on a block event $\Ecal$
disseminated throughout the lattice,
\begin{equation}
\mathfrak{q}_{L,\,\beta}(\Ecal):=\Bigl(\Bigl\langle\prod_{\tau\in\T_{L/2}}\vartheta_{\tau}(\Ecal)\Bigr\rangle_{L,\,\beta}\Bigr)^{(2/L)^d}.
\end{equation}
If $\kappa\leq \tfrac{S+1}{S}$,   $\Ecal_1,...,\Ecal_m$ are block events, and $\tau_1,...,\tau_m\in\T_{L/2}$ are distinct,  we get,
by a standard repeated use of reflection positivity,  the  chessboard estimates
\begin{equation}\label{chessboardestimate}
 \Bigl\langle\prod_{j=1}^m\vartheta_{\tau}(\Ecal_j)\Bigr\rangle_{L,\,\beta}\leq \prod_{j=1}^m\Bigl(\Bigl\langle\prod_{\tau\in\T_{L/2}}\vartheta_{\tau}(\Ecal_j)\Bigr\rangle_{L,\,\beta}\Bigr)^{(2/L)^d}=\prod_{j=1}^m\mathfrak{q}_{L,\,\beta}(\Ecal_j).
\end{equation}
Note that we have chosen to split $\T_L$ into $2\times 2\times\dots\times2$ blocks with the bottom left corner of the basic block $C$ at the origin $(0,0,\dots,0)$. If we had instead replaced  the basic block $C$  by its shift $C+e_1$ by the unit vector $e_1=(1,0,\dots,0)$,
the same estimate would hold with  the new partition with all blocks shifted by $e_1$. We will use this fact in the sequel.

The proof of the  useful property of  subadditivity of the function  $\mathfrak{q}_{L,\,\beta}$ for classical systems \cite[Lemma 5.9]{B} can be also directly extended  to our case.
\begin{subadd}\label{subadd}
 Suppose $\kappa\leq \tfrac{S+1}{S}$. If $\Ecal, \Ecal_1,\Ecal_2,...$ are events  on $C $ such that $\Ecal\subset\cup_k \Ecal_k$, then 
 \begin{equation}
  \mathfrak{q}_{L,\,\beta}( \Ecal)\leq \sum_k \mathfrak{q}_{L,\,\beta}(\Ecal_k).
 \end{equation}
\end{subadd}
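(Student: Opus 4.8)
The plan is to mimic the proof of the subadditivity lemma for classical chessboard estimates, e.g.\ \cite[Lemma 5.9]{B}, checking that nothing in that argument relies on the classical nature of the state beyond properties we have already established: positivity of the state, the chessboard estimate \eqref{chessboardestimate}, and the fact that the events in question are \emph{classical} block events (products of indicators with the identity operator), so that union/complement operations on them make sense and commute with everything. First I would reduce to the case of a finite union, $\Ecal\subset\bigcup_{k=1}^{N}\Ecal_k$, since $\frak{q}_{L,\beta}$ is monotone (a consequence of positivity and \eqref{chessboardestimate}, or just of Lemma~\ref{subadd} applied with a single set on the right) and the partial unions $\bigcup_{k\le N}\Ecal_k$ increase to $\bigcup_k\Ecal_k$; then one passes to the limit using that $\{0,1\}^C$ is finite so in fact the union stabilises after finitely many terms.

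For the finite union, I would expand $\bigl\langle\prod_{\bt\in\T_{L/2}}\vartheta_{\bt}(\Ecal)\bigr\rangle_{L,\beta}$ by writing, for each block $C_{\bt}$, the indicator of $\vartheta_{\bt}(\Ecal)$ as bounded above by $\sum_{k=1}^N \vartheta_{\bt}(\Ecal_k)$ (here using $\Ecal\subset\bigcup_k\Ecal_k$ at the level of indicators, valid because these are classical events), and then multiplying out: one obtains that $\bigl\langle\prod_{\bt}\vartheta_{\bt}(\Ecal)\bigr\rangle_{L,\beta}$ is at most the sum over all functions $\phi:\T_{L/2}\to\{1,\dots,N\}$ of $\bigl\langle\prod_{\bt}\vartheta_{\bt}(\Ecal_{\phi(\bt)})\bigr\rangle_{L,\beta}$. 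Now apply the chessboard estimate \eqref{chessboardestimate} to each such term to get $\bigl\langle\prod_{\bt}\vartheta_{\bt}(\Ecal_{\phi(\bt)})\bigr\rangle_{L,\beta}\le\prod_{\bt}\frak{q}_{L,\beta}(\Ecal_{\phi(\bt)})$. Summing over all $\phi$ factorises as $\prod_{\bt\in\T_{L/2}}\bigl(\sum_{k=1}^N\frak{q}_{L,\beta}(\Ecal_k)\bigr)=\bigl(\sum_{k=1}^N\frak{q}_{L,\beta}(\Ecal_k)\bigr)^{(L/2)^d}$. Taking the $(2/L)^d$-th power of both sides gives exactly $\frak{q}_{L,\beta}(\Ecal)\le\sum_{k=1}^N\frak{q}_{L,\beta}(\Ecal_k)$, and then letting $N\to\infty$ finishes the proof.

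The only point that needs a word of care — and the main (minor) obstacle — is the legitimacy of the inequality $\mathbf 1_{\vartheta_{\bt}(\Ecal)}\le\sum_k\mathbf 1_{\vartheta_{\bt}(\Ecal_k)}$ \emph{inside} the quantum expectation and the subsequent expansion: this works precisely because all the events involved are classical block events, so each $\vartheta_{\bt}(\Ecal_k)I$ is of the form (indicator)$\otimes$(identity), these operators all commute, are positive, and the distributive expansion is just the classical one performed pointwise in the occupation variable $\bn$ before taking the trace and summing over $\bn$; positivity of $\langle\cdot\rangle_{L,\beta}$ then preserves the inequality. Once this is noted, everything else is the verbatim classical computation, and the passage to the limit $N\to\infty$ is trivial because $\{0,1\}^C$ is finite.
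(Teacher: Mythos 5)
Your proposal is correct and follows essentially the same route as the paper's proof: bound the disseminated indicator of $\Ecal$ by the sum over choices $(k_{\bt})$ of disseminated $\Ecal_{k_{\bt}}$'s using positivity of the state, apply the chessboard estimate \eqref{chessboardestimate} to each term, and factorise the resulting sum as $\bigl(\sum_k\mathfrak{q}_{L,\,\beta}(\Ecal_k)\bigr)^{(L/2)^d}$. Your extra remarks on why classical block events permit the pointwise expansion, and on the reduction to finite unions, are sound elaborations of what the paper leaves implicit.
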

\vspace{-0.3cm}
\begin{proof} Using subadditivity of $\langle \cdot\rangle_{L,\,\beta}$, we get
\begin{equation}
  \mathfrak{q}_{L,\,\beta}(\Ecal)^{(L/2)^d}
  =\bigg\langle\prod_{\tau\in\T_{L/2}}\vartheta_{\tau}(\Ecal)\bigg\rangle_{L,\,\beta}\le
\sum_{(k_{\tau})}\bigg\langle\prod_{\tau\in\T_{L/2}}\vartheta_{\tau}(\Ecal_{k_{\tau}})\bigg\rangle_{L,\,\beta}
\end{equation}
Using now the chessboard estimate
\begin{equation}
\bigg\langle\prod_{\tau\in\T_{L/2}}\vartheta_{\tau}(\Ecal_{k_{\tau}})\bigg\rangle_{L,\,\beta}\le
\prod_{\tau\in\T_{L/2}}   \mathfrak{q}_{L,\,\beta}(\Ecal_{k_{\tau}}),
\end{equation}
we get
\begin{multline}
  \mathfrak{q}_{L,\,\beta}(\Ecal)^{(L/2)^d}\le
\sum_{(k_{\tau})}\prod_{\tau\in\T_{L/2}} \mathfrak{q}_{L,\,\beta}(\Ecal_{k_{\tau}})=\\=
\prod_{\tau\in\T_{L/2}}\biggl(\sum_{k} \mathfrak{q}_{L,\,\beta}(\Ecal_{k})\biggr)
=\biggl(\sum_{k} \mathfrak{q}_{L,\,\beta}(\Ecal_{k})\biggr)^{(L/2)^d}.
\end{multline}
\end{proof}
Let us introduce the set $\Bcal$ of bad configurations, $\Bcal=\{-1,1\}^C\setminus( \Gcal^{\mspace{1mu}\textnormal{e}}\cup \Gcal^{\mspace{1mu}\textnormal{o}})$,
and use  $\tau_{r}$ to denote the shift by $r\in\T_L$.
The proof of the existence of two distinct  KMS states is based on the following lemma. 

\begin{incompat}\label{imcompatcondition}
There exists functions $\mu_0,\kappa_0$  as  stated in Theorem \ref{quantumchessboardstates} such that for any $\varepsilon>0$,  $\mu$ such that $|\mu|<\mu_0$ and $\kappa<\kappa_0(\mu)$  
there exists $\beta_0$ such that 
 for  any $\beta>\beta_0$, any $L$ sufficiently large, and any distinct $\tau_1,\tau_2\in\T_{L}$,
\begin{align}
\langle& \Bcal\rangle_{L,\,\beta}<\varepsilon, \label{badblocks}
\\
\langle& \tau_{2\tau_1}(\Gcal^{\mspace{1mu}\textnormal{e}})\cap\tau_{2\tau_2}(\Gcal^{\mspace{1mu}\textnormal{o}})\rangle_{L,\,\beta}<\varepsilon. \label{incompatgoodblocks}
\end{align}
\end{incompat}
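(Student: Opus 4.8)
The plan is to deduce both bounds from the chessboard estimate \eqref{chessboardestimate} together with the subadditivity Lemma \ref{subadd}, reducing everything to estimating the single-block quantities $\mathfrak{q}_{L,\,\beta}(\Ecal)$ for $\Ecal$ equal to the bad block $\Bcal$ or to the ``incompatible pair'' configurations, and then controlling these via the partition function. First I would observe that $\langle\Bcal\rangle_{L,\,\beta}\le\mathfrak{q}_{L,\,\beta}(\Bcal)$ and $\langle\tau_{2\bt_1}(\Gcal^{\textnormal{e}})\cap\tau_{2\bt_2}(\Gcal^{\textnormal{o}})\rangle_{L,\,\beta}\le\mathfrak{q}_{L,\,\beta}(\Gcal^{\textnormal{e}})\,\mathfrak{q}_{L,\,\beta}(\Gcal^{\textnormal{o}})$ (using $\bt_1\ne\bt_2$ and, if the two blocks are reflections of each other across a lattice plane, the shift-by-$\be$ remark to realign the partition), so it suffices to show that each relevant $\mathfrak{q}_{L,\,\beta}$ can be made smaller than any prescribed $\delta$ once $\beta$ is large, provided $\mu>0$ and $\kappa<\kappa_0(\mu)$.

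Next I would decompose $\Bcal$ into manageable sub-events. A block configuration is ``bad'' if it is neither the even nor the odd staggered pattern; by subadditivity it is enough to bound $\mathfrak{q}_{L,\,\beta}$ of each of a finite list of block events: the fully empty block, blocks containing at least one occupied nearest-neighbour edge inside $C$ or across its boundary, and blocks with intermediate occupancy. The homogenised estimate $\mathfrak{q}_{L,\,\beta}(\Ecal)=\big(\langle\prod_{\bt}\vartheta_{\bt}(\Ecal)\rangle_{L,\,\beta}\big)^{(2/L)^d}$ rewrites as a ratio of a constrained partition function to $Z_L(\beta)$; in the numerator the disseminated pattern forces the occupancy configuration on the whole torus (up to the global even/odd choice for a staggered pattern), so the trace over spins factorises site-by-site or edge-by-edge and can be computed explicitly, while $Z_L(\beta)$ is bounded below by the contribution of the pure staggered configuration alone. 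This is exactly where the Lemmas \ref{emptystaggeredestimate}--\ref{keyestimate} advertised in the introduction enter: they supply the spin-trace bounds $\tr\,\ex^{-\beta H_L(\bn)}$ for the staggered $\bn$ versus the competing occupancy patterns, and it is their quantum-mechanical content (the ``effective entropic repulsion'') that makes the staggered configuration dominant. Taking $2/L$-th powers kills all sub-exponential-in-volume factors, leaving a per-site (or per-block) comparison: for the empty block one gets a factor like $\ex^{-\beta\mu/2}$ relative to staggered, which is small for $\beta$ large since $\mu>0$; for a block with an internal or boundary occupied edge one gets a factor governed by $\ex^{-\beta(\text{spin gap})+\beta\kappa}$ times occupancy weights, which is small precisely when $\kappa<\kappa_0(\mu)$, and this inequality is what defines $\kappa_0$.

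For \eqref{incompatgoodblocks} the point is that $\Gcal^{\textnormal{e}}$ and $\Gcal^{\textnormal{o}}$ are each ``compatible'' patterns — disseminating $\Gcal^{\textnormal{e}}$ throughout the torus produces the globally even staggered configuration, hence $\mathfrak{q}_{L,\,\beta}(\Gcal^{\textnormal{e}})$ tends to the per-site weight of the staggered state, which is a fixed number of order one, say $c_{\textnormal{st}}$. Naively this only gives $\mathfrak{q}_{L,\,\beta}(\Gcal^{\textnormal{e}})\mathfrak{q}_{L,\,\beta}(\Gcal^{\textnormal{o}})\approx c_{\textnormal{st}}^2$, not smallness. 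The resolution is that when a reflected copy of $\Gcal^{\textnormal{e}}$ sits next to a reflected copy of $\Gcal^{\textnormal{o}}$, the two staggered patterns are ``out of phase'' along the shared face, so the disseminated pattern $\prod_{\bt}\vartheta_{\bt}(\cdot)$ forces a whole hyperplane of doubly-occupied nearest-neighbour edges straddling that face; normalising by $Z_L(\beta)$ and taking the $(2/L)^d$-th power leaves a factor that is small for $\beta$ large (again controlled by the spin gap together with $\kappa<\kappa_0(\mu)$), which beats the bounded prefactor. Concretely I would either place the two events in adjacent blocks and unfold the product of reflections directly, or invoke subadditivity by writing $\Gcal^{\textnormal{e}}\subset\bigcup(\text{blocks with an occupied cross-face edge})$ relative to the shifted partition. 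The main obstacle is the explicit computation and comparison of the spin traces $\tr\,\ex^{-\beta H_L(\bn)}$ over the relevant occupancy configurations — establishing that the staggered occupancy strictly dominates, with a quantitative gap, and extracting from that gap the explicit threshold $\kappa_0(\mu)$ with $\kappa_0>0$ on $(0,\mu_0)$ — and this is precisely the content deferred to Lemmas \ref{emptystaggeredestimate}--\ref{keyestimate}; everything else is bookkeeping with \eqref{chessboardestimate} and Lemma \ref{subadd}.
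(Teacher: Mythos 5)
Your treatment of \eqref{badblocks} is essentially the paper's: the one-block chessboard estimate gives $\langle\Bcal\rangle_{L,\,\beta}\leq\mathfrak{q}_{L,\,\beta}(\Bcal)$, subadditivity (Lemma \ref{subadd}) reduces this to the finitely many bad block configurations, and the smallness of each disseminated configuration is exactly the content of Lemmas \ref{emptystaggeredestimate}--\ref{keyestimate}, which compare the constrained partition functions $Z^{(k)}_L$ to the staggered one and which determine $\kappa_0$. You have not carried out those spin-trace estimates (they are part of the proof of this lemma, not prerequisites), but the route is the right one.

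The genuine gap is in \eqref{incompatgoodblocks}. As you yourself note, the direct chessboard bound $\langle\tau_{2\bt_1}(\Gcal^{\mspace{1mu}\textnormal{e}})\cap\tau_{2\bt_2}(\Gcal^{\mspace{1mu}\textnormal{o}})\rangle_{L,\,\beta}\leq\mathfrak{q}_{L,\,\beta}(\Gcal^{\mspace{1mu}\textnormal{e}})\,\mathfrak{q}_{L,\,\beta}(\Gcal^{\mspace{1mu}\textnormal{o}})$ is vacuous: both factors tend to $1$ precisely because the staggered patterns dominate. Your proposed repair --- that the two disseminated patterns are ``out of phase'' and force a defect along a shared face --- only makes sense when $C_{\bt_1}$ and $C_{\bt_2}$ are adjacent. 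The lemma must hold for \emph{arbitrary} distinct $\bt_1,\bt_2$, and for distant blocks the event $\tau_{2\bt_1}(\Gcal^{\mspace{1mu}\textnormal{e}})\cap\tau_{2\bt_2}(\Gcal^{\mspace{1mu}\textnormal{o}})$ forces no defect at any prescribed location; the transition from one staggered phase to the other can occur anywhere in between. This is why the paper runs a Peierls argument on the coarse lattice $\T_{L/2}$: one takes the connected component $\Delta$ of even-staggered blocks containing $\bt_1$, extracts a minimal cutset $\gamma$ separating $\bt_1$ from $\bt_2$, associates with $\gamma$ at least $\abs{\gamma}/(4d)$ bad blocks all lying in a single (possibly shifted) partition, applies the chessboard estimate \eqref{E:gammaextchessboard} to that collection, and then sums over contours against the entropy factor $c^{\abs{\gamma}}$. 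Two ingredients of this are entirely missing from your proposal: (i) the contour construction and the entropy--energy balance, without which smallness of $\mathfrak{q}_{L,\,\beta}(\Bcal)$ cannot be converted into \eqref{incompatgoodblocks}; and (ii) the observation that a block just outside $\Delta$ need not itself be bad --- it may carry the \emph{odd} staggered pattern --- in which case the bad block is the one straddling the interface in the partition shifted by a unit vector, which is why the chessboard estimate for the shifted partition (the remark after \eqref{chessboardestimate}) is needed. Your sketch gestures at (ii) with the ``cross-face edge'' remark, but without the contour framework of (i) it does not yield the stated bound.
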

Deferring its proof to the next section, we show here how it implies  Theorem \ref{quantumchessboardstates}.\\
\emph{Proof of Theorem \ref{quantumchessboardstates} given Lemma \ref{imcompatcondition}}.
We closely follow the proof of Lemma 4.5 and Proposition 3.9 in \cite{B-C-S}. Define 
\begin{equation}
\T_L^{\textnormal{front}}=\{x\in\T_L : -\lfloor L/4-1/2\rfloor \leq x_1\leq \lceil L/4-1/2\rceil\}.
\end{equation}
We denote by $\mathfrak{A}_L^{\textnormal{front}}$ the algebra of observables localised in $\T_L^{\textnormal{front}}$.

Let $\Delta_M\subset\T_{L/2}$ be a $M\times M$ block of sites on the ``back'' of $\T_{L/2}$ (dist$(0,\Delta_M)\geq L/4 -M)$. Then for a block event $\Ecal$ depending only on the Ising configuration in $C$ define
\begin{equation}
 \rho_{L,M}(\Ecal)=\frac{1}{|\Delta_M|}\sum_{\tau\in\Delta_M}\tau_{2\tau}(\Ecal).
\end{equation}
If $\langle\Ecal\rangle_{L,\,\beta}\geq c$ for all $L\gg 1$ for a constant $c>0$ then we can define a new state on $\mathfrak{A}^{\textnormal{front}}_L$, by
\begin{equation}
\langle \cdot \rangle_{L,M;\beta}=\frac{\langle \rho_{L,M}(\Ecal)\;\cdot\;\rangle_{L,\,\beta}}{ \langle\rho_{L,M}(\Ecal)\rangle_{L,\,\beta}}.
\end{equation}
We claim that if $\langle \,\,\rangle_\beta$ is a weak limit of $\langle\,\, \rangle_{L,M;\beta}$ as $L\to\infty$ and then $M\to\infty$ then $\langle \,\,\rangle_\beta$ is a KMS state at inverse temperature $\beta$ invariant under translations by $2\tau$ for $\tau\in\T_{L}$.

Indeed translation invariance comes from the spatial averaging in $\rho_{L,M}(\Ecal)$. As in \cite{B-C-S} we need to show that $\langle\,\, \rangle_{\beta}$ satisfies the KMS condition \eqref{KMScond}. For an observable  $A$ on the `front' of the torus, $\T_L^{\textnormal{front}}$, we have
\begin{equation}
[\alpha_t^{(L)}(A),\rho_{L,M}(\Ecal)]\to 0 \text{ as } L\to\infty
\end{equation}
in norm topology uniformly for $t$ in compact subsets of $\C$. Using this and \eqref{KMScond} for the finite volume Gibbs states we have that for $A,B$ bounded operators on the ``front'' of the torus
\begin{equation}
\langle \rho_{L,M}(\Ecal) AB\rangle_{L,\,\beta}=\langle \rho_{L,M}(\Ecal) \alpha^{(L)}_{-i\beta}(A)B\rangle_{L,\,\beta} +o(1) \text{ as } L\to\infty.
\end{equation}
Because $\alpha^{(L)}_{-i\beta}(B)\to\alpha_{-i\beta}(B)$ as $L\to\infty$ in norm we have that $\langle\,\,\rangle_{L,M;\beta}$ converges as $L\to\infty$ and then $M\to\infty$ to a KMS state at inverse temperature $\beta$.

The proof of Theorem \ref{quantumchessboardstates} follows by taking $\Ecal=\Gcal^{\mspace{1mu}\textnormal{e}}$ or $\Ecal=\Gcal^{\mspace{1mu}\textnormal{o}}$ as we know both staggered configurations have the same expectation we can define a state $\langle \,\, \rangle^{\textnormal{e}}_{L,M;\beta}$, using Lemma \ref{imcompatcondition} we conclude that $\langle\rho_{L,M}(\Gcal^{\mspace{1mu}\textnormal{e}})\rangle_{L,\,\beta}$ is uniformly positive and hence
\begin{equation}
\langle \tau_{2\tau}(\Gcal^{\mspace{1mu}\textnormal{e}}) \rangle^{\textnormal{e}}_{L,M;\beta}\geq 1-\varepsilon,
\end{equation}
for any $\tau\in\T_L^{\textnormal{front}}$ (if $M\ll L/2$) and similarly for $\langle \,\, \rangle^{\textnormal{o}}_{L,M;\beta}$. If $\varepsilon$ is small enough then the right-hand side of this inequality will be greater than $1/2$, hence in the thermodynamic limit $\Gcal^{\mspace{1mu}\textnormal{e}}$ will dominate. \qed

To prove Lemma \ref{imcompatcondition} we use a version of Peierls' argument hinging on chessboard estimates.

\subsection{Peierls' argument}

For a given Ising configuration, consider the event $\tau_{2\tau_1}(\Gcal^{\mspace{1mu}\textnormal{e}})\cap \tau_{2\tau_2}(\Gcal^{\mspace{1mu}\textnormal{o}})$ that the blocks $C_{\tau_1}$ and $C_{\tau_2}$ have different staggered configurations described by $\Gcal^{\mspace{1mu}\textnormal{e}}$ and $\Gcal^{\mspace{1mu}\textnormal{o}}$ respectively. 
The idea is to show the existence of a contour separating the points $\tau_1$ and $\tau_2$ and to use chessboard estimates to show that occurrence of such a contour is improbable. 

Consider the set of all  blocks (labeled by) $\tau\in\T_{L/2}$ such that a translation of the even staggered configuration $\tau_{2\tau}(\Gcal^{\mspace{1mu}\textnormal{e}})$ occurs on it.
Let $\Delta\subset\T_{L/2}$ be its connected component containing $\tau_1$. Consider the component $\overline\Delta\subset\T_{L/2}$
of $\Delta^\com$ containing $\tau_2$. The set of edges $\gamma$ of the graph $\T_{L/2}$ between vertices of $\overline\Delta$ 
and its complement $\overline\Delta^\com$ is a minimal cutset of $\Delta$. Informally, $\gamma$ is a contour between   $\Delta$ with all its holes except the one containing $\tau_2$ filled up and  the remaining component containing $\tau_2$--- \emph{a contour separating $\tau_1$ and $\tau_2$}.
The standard fact is that the number of contours with a fixed number of edges $\abs{\gamma}=n$ separating two vertices $\tau_1$ and $\tau_2$ is bounded by $c^n$ with a suitable constant $c$.

Given a contour $\gamma$ of length $\abs{\gamma}=n$, there exists a coordinate direction such that there are at least $n/d$ edges in $\gamma$ aligned along this direction. 
Precisely half of them have their outer endpoint (the vertex in $\overline\Delta$) ``on the left'' of its inner endpoint, choosing (arbitrarily) the direction of the chosen coordinate axis (without loss of generality we can take for this the first coordinate axis) as $e_1$,
there are at least $n/(2d)$ edges $\{\tau, \tau+e_1\}$ such that $\tau\in \overline\Delta$ and $\tau+e_1\in \Delta$.

Now, the crucial claim is that with each contour we can associate at least  $1/2$ of the $n/(2d)$ bad blocks (with a configuration from $\vartheta_{2\tau}(\Bcal)$), all belonging to  a given fixed partition: either to our original partition of $\T_L$ labelled by $\T_{L/2}$ or    to a new partition of $\T_L$ with the basic block  $C$ shifted by a unit vector from $\T_L$ in direction $e_1$.
 Indeed, any block corresponding to an outer vertex  $\tau$  above is either bad or, if  not, it has to be a translation  $\tau_{2\tau}(\Gcal^{\mspace{1mu}\textnormal{o}})$ of the odd staggered configuration (being the even staggered configuration would be in contradiction with the assumption that $\Delta$ is a connected component of the set of blocks with even staggered configuration). However, then the  block shifted by a unit vector in $\T_L$ in direction $e_1$ features an odd staggered configuration on its left-hand half and an even staggered configurations on its right-hand half, i.e., a configuration that belongs to the properly shifted set $\Bcal$
(here it is helpful that the set $\Bcal$ is invariant with respect to the reflection through the middle plane of the block).

We use $S(\gamma)$ to denote this collection of at least $\abs{\gamma}/(4d)$ bad blocks  associated with contour $\gamma$.
Given that, according to the construction above, all blocks from $S(\gamma)$ belong to the same partition (either the original one or a shifted one), we can use the chessboard estimate based on the the corresponding partition to bound
the probability that all blocks of a given set $S(\gamma)$
are bad  by
\begin{equation}
\label{E:gammaextchessboard}
\bigg\langle \prod_{\tau\in S(\gamma)}\vartheta_{\tau}(\Bcal)\bigg\rangle_{L,\,\beta}\leq \mathfrak{q}_{L,\,\beta}(\Bcal)^{\abs{S(\gamma)}}.
\end{equation}

As a result,  assuming that  $\mathfrak{q}_{L,\,\beta}(\Bcal)\leq 1$ (we will later show it can be made arbitrarily small),  the expectation of the event $ \tau_{2\tau_1}(\Gcal^{\mspace{1mu}\textnormal{e}})\cap \tau_{2\tau_2}(\Gcal^{\mspace{1mu}\textnormal{o}})$ is bounded by 
\begin{equation}
\bigg\langle \tau_{2\tau_1}(\Gcal^{\mspace{1mu}\textnormal{e}})\cap \tau_{2\tau_2}(\Gcal^{\mspace{1mu}\textnormal{o}})\bigg\rangle_{L,\,\beta}\le \sum_{\gamma \text{ separating } \tau_1 \text{ and } \tau_2} \mathfrak{q}_{L,\,\beta}(\Bcal)^{\abs{\gamma}/(4d)} 2^{\abs{\gamma}/(2d)+1}.
\end{equation}
Here, $2^{\abs{\gamma}/(2d)+1}	$ is the bound on the number of sets $S(\gamma)$ associated with the contour $\gamma$ once the direction $e_1$ is chosen.

This leads to the final bound
\begin{equation}
\bigg\langle \tau_{2\tau_1}(\Gcal^{\mspace{1mu}\textnormal{e}})\cap \tau_{2\tau_2}(\Gcal^{\mspace{1mu}\textnormal{o}})\bigg\rangle_{L,\,\beta}\le 
\sum_{n=4}^\infty 2\bigl(4\mathfrak{q}_{L,\,\beta}(\Bcal)^{n/(4d)} \bigr)c^n.
\end{equation}

We now see that Lemma \ref{imcompatcondition} will hold if $\mathfrak{q}_{L,\,\beta}(\Bcal)$ can be made arbitrarily small by tuning the parameters of the model correctly. Hence we turn our attention to this.

For the remaining technical part of this section we restrict ourselves to the \emph{two-dimensional case}.

   For $d=2$, the set $\Bcal$ consists of 14 configurations that can be classified into  five events according to the number of sites in $C$ that have Ising spin $+1$, $\Bcal=\Bcal^{(0)}\cup\Bcal^{(1)}\cup\Bcal^{(2)}\cup\Bcal^{(3)}\cup\Bcal^{(4)}$.
Here, $\Bcal^{(0)}$ and $\Bcal^{(4)}$ consist of a single configuration (fully $-1$ and fully $+1$, respectively) and $\Bcal^{(1)},\Bcal^{(2)},\Bcal^{(3)}$
consist each of 4 configurations related by symmetries.
Notice that the event  $\Bcal^{(2)}$ has precisely two $+1$ spins at neighbouring positions (excluding the configurations  $\sigma^{\textnormal{e}}$ and $\sigma^{\textnormal{o}}$). 

%

By subadditivity we can bound $\mathfrak{q}_{L,\,\beta}(\Bcal)$ by the sum of expectations of  homogenised patterns based on the fourteen configurations from $\Bcal$  disseminated throughout the lattice by reflections. In view of the symmetries, we need only consider only 5 configurations $\sigma^{(k)}, k=0,1,\dots,4$, one from each event $\Bcal^{(k)}, k=0,1,\dots,4$. In fact we can see that, as reflections flips the sign of Ising variables, that we need only consider $k=0,1,2$ Indeed, the dissemination of pattern $\Bcal^{(0)}$  differs from the dissemination of pattern $\Bcal^{(4)}$ by a shift by $2e_1$, and the dissemination of pattern $\Bcal^{(1)}$  differs from the dissemination of pattern $\Bcal^{(3)}$ by a shift by $2e_1$ and a rotation.

We use  $Z^{(k)}_L(\beta)$ to denote the corresponding quantities
\begin{equation}
 Z^{(k)}_L(\beta)=\mathfrak{q}_{L,\,\beta}(\{\sigma^{(k)}\})^{(L/2)^2}Z_L(\beta),
\end{equation}
for $k\in\{0,1,\dots,4\}$. For notational consistency we also denote the contribution of staggered configurations on $\T_L$ as $Z^{\mspace{1mu}(\textnormal{e})}_L(\beta)$ and $ Z^{\mspace{1mu}(\textnormal{o})}_L(\beta)$ \vspace{0.3cm}
\begin{quantumestimates1}\label{emptystaggeredestimate}
For any $\mu\in\mathbb{R}$ and $\kappa<\kappa_0(\mu)$ we have 
  \begin{align}
 Z_L^{(0)}(\beta),Z_L^{(4)}(\beta)\leq& e^{\beta L^2 |\mu |\ S } \tr \exp\left\{\frac{\beta}{S^2}\sum_{\{ x,y\}}(S_x^1S_y^1+S_x^3S_y^3)\right\}, \label{eq:B0bound}
 \\
  Z_L^{(1)}(\beta),Z_L^{(2)}(\beta),Z_L^{(3)}(\beta)\leq&e^{\beta L^2 \left(|\mu|\ S-\kappa+\tfrac{S+1}{S}\right) }  \tr \exp\left\{\frac{\beta}{S^2}\sum_{\{ x,y\}}(S_x^1S_y^1+S_x^3S_y^3)\right\}, \label{eq:B12bound}
  \\
Z^{\mspace{1mu}(\textnormal{e})}_L(\beta),Z^{\mspace{1mu}(\textnormal{o})}_L(\beta)\geq & e^{\beta L^2\left(-|\mu|S-2\kappa+2\tfrac{S+1}{S}\right)}\tr \exp\left\{\frac{\beta}{S^2}\sum_{\{ x,y\}}(S_x^1S_y^1+S_x^3S_y^3)\right\} \label{eq:Bebound}
 \end{align}
\end{quantumestimates1}
\begin{proof}
We begin by removing the terms  associated to $S(S+1),\kappa$ and $\mu$ from the Hamiltonian, i.e., we need bounds on the terms $(-\tfrac{S+1}{S}+\kappa)\sum_{\{ x,y\}}\sigma_x^{(k)}\sigma_y^{(k)}$
and $\mu\sum_{x\in\T_L}\sigma^{(k)}_xS_x^3$ (occuring in $-H$), for $\sigma^{(k)}$, the Ising configuration corresponding to the disseminated pattern $\Bcal^{(k)}$. 

For the first term we use that $\sigma_x^{(k)}\sigma_y^{(k)}=\pm 1$ for each  $\{x,y\}$.
In particular, we get $\sum_{\{ x,y\}}\sigma_x^{(k)}\sigma_y^{(k)}=0$ for $k=0,4$,
it equals $-L^2$ for $k=1,2,3$, and it equals
$-2L^2$  for $k=\textnormal{e},\textnormal{o}$.
Indeed, 
for $\sigma^{(0)}$ and $\sigma^{(4)}$ half of the links yield $-1$ (they are are between a \emph{plus} and a  \emph{minus}) and the second half yield $+1$. For $\sigma^{(1)}$, $\sigma^{(2)}$, and $\sigma^{(3)}$ three quarters of the links yield $-1$ and  one quarters $+1$. Finally, for $k=\textnormal{e}$ and $k=\textnormal{o}$ all links yield $-1$.

For the $\mu$-term we use the simple bound
 \begin{equation}
 -|\mu|\ S L^2\leq \mu\left\|\sum_{x\in\T_L}\sigma_xS_x^3\right\|\leq |\mu| S L^2.
 \end{equation}
 
 Together this gives the factors in front of the traces in equations \eqref{eq:B0bound}, \eqref{eq:B12bound}, and \eqref{eq:Bebound}. What remains in each case is a term of the form 
\begin{equation}
-\frac{1}{S^2}\sum_{\{ x,y\}}\sigma^{(k)}_x\sigma^{(k)}_y(S_x^1S_y^1+S_x^3S_y^3)
\end{equation}
where $k\in\{0,1,...4,e,o\}$. By conjugating with a unitary operator acting as $e^{i\pi S^2}$ on the sites where $\sigma^{(k)}_x=-1$ we can turn this operator into,
\begin{equation}
-\frac{1}{S^2}\sum_{\{ x,y\}}(S_x^1S_y^1+S_x^3S_y^3).
\end{equation}
As we have conjugated by a unitary operator this conjugation does not affect the trace. This completes the proof.
\end{proof}

As a result, we get the following bounds on the expectations of the disseminated bad configurations $\mathfrak{q}_{L,\,\beta}(\{\sigma^{(k)}\})$ for $k=0,1,\dots,4$.

\begin{quantumestimates4}\label{keyestimate}
 Let $\mu\in\mathbb{R}$ and $\kappa<\kappa_0(\mu)$. We have
 \begin{align}
 \qfrak_{L,\beta}(\{\sigma^{(0)}\}),\qfrak_{L,\beta}(\{\sigma^{(4)}\})\leq &2^{-4/L^2}\exp\left\{4\beta\left(2|\mu| S+2\kappa-2\tfrac{S+1}{S}\right)\right\}
\\
\qfrak_{L,\beta}(\{\sigma^{(1)}\}),\qfrak_{L,\beta}(\{\sigma^{(2)}\}),\qfrak_{L,\beta}(\{\sigma^{(3)}\})\leq &2^{-4/L^2}\exp\left\{4\beta\left(2|\mu| S+\kappa-\tfrac{S+1}{S}\right)\right\}
\end{align}
 \end{quantumestimates4}  
 
\begin{proof}
All the estimates follow from the previous lemmas using
\begin{equation}
\mathfrak{q}_{L,\,\beta}(\{\sigma^{(k)}\})=\left(\frac{Z_L^{(k)}(\beta)}{Z_L(\beta)}\right)^{(2/L)^2}\leq\left(\frac{Z_L^{(k)}(\beta)}{2 Z^{\mspace{1mu}\textnormal{e}}_L(\beta)}\right)^{(2/L)^2}.
\end{equation}
\end{proof}

Further, using subadditivity (Lemma \ref{subadd}) we have
\begin{equation}\label{subaddbad}
\mathfrak{q}_{L,\,\beta}(\Bcal)\leq \mathfrak{q}_{L,\,\beta}(\{\sigma^{(0)}\}) +4 \sum_{k=1}^3 \mathfrak{q}_{L,\,\beta}(\{\sigma^{(k)}\})
+\mathfrak{q}_{L,\,\beta}(\{\sigma^{(4)}\}).
\end{equation}
From Lemma \ref{keyestimate} we can see that for $\beta$ large this quantity will be small if 
\begin{equation}
\kappa < \min\{1+\tfrac{1}{S}-|\mu| S,1+\tfrac{1}{S}-2|\mu|S\}=1+\tfrac{1}{S}-2|\mu| S=:\kappa_0(\mu).
\end{equation}
This condition is compatible with the requirement $\kappa\leq 1+\tfrac{1}{S}$ in Lemma \ref{subadd} and allows us to take $\kappa>0$ once $|\mu|<\tfrac{1}{2S}+\tfrac{1}{2S^2}$.

More precisely, we see that there exists $\mu_0>0$ and a  function $\kappa_0$   that is positive on $(-\mu_0,\mu_0)$ such that if 
$|\mu|<\mu_0$, $\kappa<\max(\kappa_0(\mu),0)$, and  $\varepsilon>0$,  there exists $\beta_0(\mu,\kappa,\varepsilon)$ such that 
the claims of Lemma \ref{imcompatcondition} and  thus also Theorem \ref{quantumchessboardstates} are valid for any $\beta \ge \beta_0$.

\section*{Acknowledgement}
The research of R.K. was  supported by the grant GA\v CR 16-15238S and that of B.L.  partially by EPSRC grant EP/HO23364/1 and partially by the Alexander von Humboldt Foundation. R.K. would also like to thank Isaac Newton Institute for Mathematical Sciences for hospitality during the programme \emph{Scaling limits, rough paths, quantum field theory} (supported by EPRSC Grant Number EP/R014604/1) where the work on the final version of the paper was undertaken. 

 \nocite{*}

\end{document}